\documentclass[11pt,a4paper]{article}
\usepackage{jheppub}
\usepackage{amsmath}
\usepackage{amsfonts}
\usepackage{amssymb}
\usepackage{amsthm}
\usepackage{mathrsfs} 
\usepackage{epsfig,multicol}
\usepackage{graphicx}

\title{Non-extremal Reissner-Nordstr\"om black hole: Do asymptotic quasi-normal modes carry information about the quantum properties of the black hole?}

\author[]{Jozef Sk\'akala}

\affiliation[]{Centro de Matem\'atica, Computac\~ao e Cognic\~ao\\
Univesidade Federal do ABC\\
 Santo Andr\'e, S\~ao Paulo, Brazil}

\emailAdd{jozef.skakala@ufabc.edu.br}

\abstract{We analyze the largely accepted formulas for the
asymptotic quasi-normal frequencies of the non-extremal
Reissner-Nordstr\"om black hole, (for the
electromagnetic-gravitational/scalar perturbations). We focus on the
question of whether the gap in the spacing in the imaginary part of
the QNM frequencies has a well defined limit as \emph{n} goes to
infinity and if so, what is the value of the limit. The existence
and the value of this limit has a crucial importance from the point
of view of the currently popular Maggiore's conjecture, which
represents a way of connecting the asymptotic behavior of the
quasi-normal frequencies to the black hole thermodynamics. With the
help of previous results and insights we will prove that the gap in
the imaginary part of the frequencies does \emph{not} converge to
any limit, unless one puts specific constraints on the ratio of the
two surface gravities related to the two spacetime horizons.
Specifically the constraints are that the ratio of the surface
gravities must be rational and such that it is given by two
relatively prime integers $n_{\pm}$ whose product is an even number.
If the constraints are fulfilled the limit of the sequence is still
not guaranteed to exist, but if it exists its value is given as the
lowest common multiplier of the two surface gravities. At the end of
the paper we discuss the possible implications of our results.}

\keywords{non-extremal Reissner-Nordstr\"om black hole, Maggiore's
conjecture, asymptotic quasi-normal modes, quantum black holes}

\arxivnumber{1111.4164v1}

\begin{document}

\maketitle

\section{Introduction}

The idea that black hole quasi-normal modes carry important information about the black hole area quantization appeared for the first time in the remarkable work of Hod \cite{Hod}. Later Kunstatter \cite{Kunstatter} used the Bohr-Sommerfeld quantization condition for the adiabatic invariants to add more support for Hod's ideas. It can be shown that for a classical system that oscillates with a characteristic frequency $\omega$ the quantity $I=\int dE/\omega$  becomes an adiabatic invariant. The Bohr-Sommerfeld quantization condition then suggests that $I$ has to be quantized in integers, as:
\begin{equation}\label{B-S}
I=\int \frac{dE}{\omega}=n\cdot\hbar.
\end{equation}
Hod's original suggestion was to take for the black hole's characteristic frequency the asymptotic real part of the quasi-normal frequencies ($\omega_{n}=\omega_{nR}+i\cdot\omega_{nI}$), hence the limit
\begin{equation}
\lim_{n\to\infty}\omega_{nR}.
\end{equation}
For the Schwarzschild black hole this suggestion leads through the
first law of black hole mechanics to a remarkable result, since in
Planck units it gives the horizon's area quantization in the form:
\begin{equation}\label{statistical}
A_{n}=4\ln{(k)}\cdot n,~~~~~k\in\mathbb{Z_{+}}.
\end{equation}
This form of area quantization was, (for statistical reasons),
suggested before by Bekenstein and Mukhanov
\cite{Bekenstein, Mukhanov}. In particular, Hod's suggestion leads
to the value $k=3$ and has an interpretation also in terms of Loop
Quantum Gravity \cite{Dreyer}.

However over the years the Hod's original suggestion led to many difficulties, (for the details see for example \cite{Kon-review, Maggiore}). The various objections to the Hod's proposal were largely answered by Maggiore's modification of the Hod's original suggestion \cite{Maggiore}. Maggiore argued that black hole perturbations are in fact given by a collection of damped oscillators and their energy levels are given by
\begin{equation}
\hbar\cdot\omega_{0n}=\hbar\cdot\sqrt{\omega_{nR}^{2}+\omega_{nI}^{2}},
\end{equation}
 rather than by ~$\hbar\cdot\omega_{nR}$. The semiclassical information is then, again, recovered in the limit $n\to\infty$. If the real part of the quasi-normal frequencies is bounded and the imaginary part is (upper) unbounded, then we obtain the black hole's characteristic frequency as:
\begin{equation}\label{infty}
\lim_{n\to\infty}\Delta_{(n,n-1)}
\left(\sqrt{\omega_{nR}^{2}+\omega_{nI}^{2}}\right)=\lim_{n\to\infty}\Delta_{(n,n-1)}~\omega_{nI}~~.
\end{equation}

This leads in Planck units to the following Schwarzschild black hole
horizon's area spectrum:
\begin{equation}\label{spectrum}
A_{n}=8\pi\cdot n~.
\end{equation}
(This area spectrum was also suggested in many works before.) The
spectrum \eqref{spectrum} is not of the form \eqref{statistical},
however Maggiore offered convincing arguments that
\eqref{statistical} does not necessarily have to be expected in the
semi-classical limit \cite{Maggiore}. Maggiore's conjecture has been
later successfully applied also to the Garfinkle-Horowitz-Strominger black hole \cite{Wei} , the
Reissner-Nordstr\"om (R-N) black hole \emph{in the small charge
limit} \cite{Lopez-Ortega}, and in the same paper
\cite{Lopez-Ortega} to the uncharged Dirac field perturbations of arbitrary
non-extremal Reissner-Nordstr\"om (R-N) black hole,  always leading
to the spectrum \eqref{spectrum}. (For the Kerr black hole the asymptotic QNM spectrum was both, mumerically and analytically calculated in \cite{Yoshida, Keshet, Neitzke, Kao} and some attempts to derive the Kerr area spectrum were made in \cite{Vagenas, Medved}.) The area spectrum of the form \eqref{spectrum} was
predicted also by the use of different methods and ideas for the
non-extremal Reissner Nordstr\"om (R-N) black hole \cite{Barvinsky1, Barvinsky2}.
(Arbitrary non-extremal Reissner-Nordstr\"om black hole, not only in
the small charge limit.)

 If the results for the area spectra of the
non-extremal Reissner-Nordstr\"om black hole are correct, and so is
Maggiore's modification of the Hod's original conjecture, the
results should be confirmed through the analysis of the asymptotic
QNM frequencies of the electromagnetic-gravitational and scalar
perturbations of the R-N black hole. This is the main purpose of our
investigations in this paper. The basic questions are the following:
Does the limit \eqref{infty} generally exist in case of quasinormal
frequencies of electromagnetic-gravi\-tatio\-nal and scalar
perturbations of the non-extremal Reissner-Nordstr\"om black hole?
If it exists, what is its value? If it exists only for some specific
values of the parameters $M, Q$, what constrains does it put on
those parameters?

We believe that after a short introduction the reader already
understands what is the importance of these questions in terms of
Maggiore's conjecture. There are at least two different suggestions about how to write the adiabatic invariant for the Reissner-Nordstr\"om black hole \cite{Lopez-Ortega, Kwon}. However, if the limit \eqref{infty} exists, its value answers the
question of what is the characteristic frequency that has to be
substituted to either of the suggestions for the quantity with equally spaced spectrum. This eventually leads to the derivation of the area spectrum for the
non-extremal Reissner-Nordstr\"om black hole.

It has been shown in the previous papers \cite{Visser, Skakala}, that (also) for the R-N black hole the following holds: If and
only if the ratio of the surface gravities (of the outer and inner horizons) $\kappa_{+}/\kappa_{-}$ is a rational number, the asymptotic quasi-normal frequencies can be split into (generally) multiple equispaced
families.  By an equispaced family of QNM frequencies we mean a set of QNM
frequencies of the form
\begin{equation}\label{equispaced}
\omega_{an}=\omega_{a}+ i(\hbox{gap})\cdot n,~~~ (n\in\mathbb{N}).
\end{equation}
Here the $\omega_{a}$ frequencies we call the \emph{base}
frequencies and we define them by the condition
$0<\omega_{aI}\leq(\hbox{gap})$.  Furthermore, we have proven
\cite{Visser, Skakala} that for each of the families the constant
gap in the spacing between the frequencies is the same and it is
given as $(\hbox{gap})=\kappa_{*}=|\kappa_{\pm}|n_{\pm}$. (Here
$\kappa_{+}/|\kappa_{-}|=n_{-}/n_{+}$ and $n_{+},n_{-}$ are
relatively prime integers. Hence $\kappa_{*}$ is the lowest common
multiplier of the numbers $\kappa_{+},~|\kappa_{-}|$.)
 (The cited papers contain very general results, valid for large classes of formulas for the asymptotic frequencies, but particularly for the Reissner-Nordstr\"om black hole these properties were already discovered in \cite{Andersson}.)

These results provide the basis for our analysis of the existence
(and value) of the characteristic frequency of the R-N black hole.
(The characteristic frequency understood in the sense of Maggiore's
conjecture by the limit \eqref{infty}.) The analysis in this paper
leads to the following (unfortunate) conclusion: the limit
\eqref{infty} can exist only for strongly constrained values of the
surface gravities. Particularly, assuming the validity of the
standardly accepted formulas for the asymptotic frequencies of the
non-extremal R-N black hole \cite{Motl, Andersson}, we can prove
that the limit \eqref{infty} does not exist unless the ratio of the
two surface gravities is rational and, moreover, given as a ratio of
such relatively prime $n_{+}, n_{-}\in\mathbb{N}$, that their
product ~$n_{+}\cdot n_{-}$~ is an even number. But even in case the
rational ratio is given by the relatively prime integers whose
product is an even number we cannot guarantee the existence of the
limit \eqref{infty}. On the other hand, if it exists, we
automatically know its value.

Before we prove anything, let us intuitively demonstrate why the
limit \eqref{infty} does not exist for the rational ratio of the two
surface gravities, given as a ratio of two relatively prime
\emph{odd} numbers. We decided to demonstrate it on the example
where the surface gravities fulfil the rational ratio condition, as
in this particular case it is easy to get an insight into the logic
of what is happening. As we already mentioned, for the rational
ratios of the surface gravities the frequencies (in general) split
in multiple equispaced families. For the case ~$n_{+}\cdot n_{-}$~ being
odd (non-extremal, hence $n_{+}\cdot n_{-}>1$), we can prove that if the
frequencies exist, they must be split at least into two equispaced
families generated by at least two base frequencies with
\emph{different} imaginary parts. Let us for the sake of simplicity
assume that there are only two equispaced families generated from
the two base frequencies with different imaginary parts. If there
existed only one equispaced family, then the limit \eqref{infty}
would trivially exist and would be exactly equal to the constant
$``(\hbox{gap})$''. (This is the case of the Schwarzschild black
hole.)

 Unfortunately here the situation is more complicated: in order to investigate
 the existence of the limit \eqref{infty}, we have to monotonically
 order all the frequencies (the union of all the frequencies from the
 two families) with respect to their imaginary part. But one can immediately
 observe, that if the difference between the imaginary parts of the two
 base frequencies ~$\omega_{a}$~ is other than $``(\hbox{gap})/2$'', then the
 monotonic ordering of all the frequencies gives a structure with periodically
 changing gap (with the period 2), rather than an equispaced sequence of
 frequencies. To see it more clearly let us say that the gap between the imaginary parts of the
 two base frequencies is ~$\hbox{(gap)}/4$,~ (the constant ~$``\hbox{(gap)}$''~ is again the gap in the spacing between
 the frequencies within each of the two families). Then the gap in the
 spacing between the frequencies of the (with respect to the imaginary part) monotonically ordered union of
 the two families oscillates between ~$\hbox{(gap)}/4$~ and
 ~$3\hbox{(gap)}/4$.~ This means in such case the sequence ~$\Delta_{(n,n-1)}\omega_{nI}$~
 oscillates with the period 2 and hence the limit \eqref{infty} does not exist. We will prove that for the surface gravities rational ratios and the product $n_{+}\cdot n_{-}$ odd:  a) there are always at least two families with base frequencies having \emph{different} imaginary parts,  b) the base frequencies cannot be equispaced in such way that the collection of all the frequencies forms one equispaced sequence. This means the limit \eqref{infty} cannot for the product $n_{+}\cdot n_{-}$ odd exist!

\section{Asymptotic quasi-normal frequencies of the electromagnetic-\-gravita\-tional/scalar perturbations of the non-extremal R-N black hole}

The metric of the Reissner-Nordstr\"om black hole can be expressed
as:
\begin{equation}
ds^{2}=-f(r)dt^{2}+f(r)^{-1}dr^{2}+r^{2}d\Omega^{2}~,
\end{equation}
where in Planck units
\begin{equation}
f(r)=1-\frac{2M}{r}+\frac{Q^{2}}{r^{2}}~.
\end{equation}
The equation for the asymptotic quasi-normal frequencies for the
non-extremal Reissner-Nordstr\"om black hole was originally derived
by \cite{Motl} and the result was later independently confirmed by
\cite{Andersson}. (It is also fully consistent with the numerical
results obtained by \cite{Kokkotas}.) The equation can be for both,
scalar and electromagnetic-gravitational perturbations written as
\cite{Andersson}:
\begin{equation}\label{AndHowls}
\exp\{8\pi\omega M\}=-3-2\exp\left\{-\frac{2\pi\omega M(1-\kappa)^{2}}{\kappa}\right\}~.
\end{equation}
Here ~$\kappa=\sqrt{1-\frac{Q^{2}}{M^{2}}}$. This formula is not guaranteed to hold for the extremal case. (Also one has to be careful with the order of limits when recovering the Schwarzschild black hole case, the limit $Q\to 0$ within this formula does not lead to the Schwarzschild spectrum. For the discussion of this issue see \cite{Motl, Kunstatter2}.)

Take the surface gravities of the two horizons ~$\kappa_{\pm}=\frac{1}{2}f'(r_{\pm})$.~ (Here $r_{+}$ is the radial coordinate of the outer horizon and $r_{-}$ is the radial coordinate of the inner, Cauchy horizon.) Then the condition for the
asymptotic QNM frequencies of the R-N black hole can be written as:
\begin{equation}\label{basicEQ}
\exp\left(\frac{2\pi\omega}{\kappa_{+}}\right) + 3\exp\left(\frac{2\pi\omega}{|\kappa_{-}|}\right) + 2=0 ~~.
\end{equation}
This can be easily shown:
Rewrite the equation \eqref{AndHowls}, so it becomes:
\begin{equation}\label{Andersson}
\exp\left\{2\pi\omega\left(4M+\frac{M(1-\kappa)^{2}}{\kappa}\right)\right\}+3\exp\left\{2\pi\omega\frac{M(1-\kappa)^{2}}{\kappa}\right\}+2=0.
\end{equation}
Then one can write $|\kappa_{\pm}|$ as:
\begin{equation}
|\kappa_{\pm}|=\pm\frac{f'(r_{\pm})}{2}=\pm\frac{M^{2}\kappa(\kappa\pm 1)}{r_{\pm}^{3}}.
\end{equation}
Then, since ~$r_{\pm}=M(1\pm\kappa)$,~ we can conclude the following:
\begin{equation}\label{RNsurface}
|\kappa_{\pm}|=\pm\frac{\kappa(\kappa\pm1)}{M(1\pm\kappa)^{3}}.
\end{equation}
For ~$1/|\kappa_{-}|$~ this gives
\begin{equation}
\frac{1}{|\kappa_{-}|}=\frac{M(1-\kappa)^{2}}{\kappa},
\end{equation}
~ and for ~$1/|\kappa_{+}|=1/\kappa_{+}$~ it gives
\begin{equation}
\frac{1}{\kappa_{+}}=\frac{M(1+\kappa)^{2}}{\kappa}=4M+\frac{M(1-\kappa)^{2}}{\kappa},
\end{equation}
both being the expressions in the exponents in the equation \eqref{Andersson}.

The quasi-normal frequencies are the solutions of the equation
\eqref{basicEQ}, such that they fulfil the condition
~$\omega_{R}\geq 0$,~ where again
~$\omega=\omega_{R}+i\cdot\omega_{I}$.~ (The quasi-normal
frequencies are symmetrically spaced with respect to the imaginary
axis, so the frequencies with the negative real part are then
obtained from such frequencies by a simple transformation
~$\omega_{R}\to -\omega_{R}$.)~ Call the solutions of
\eqref{basicEQ} with ~$\omega_{R}\geq 0$~ to be the
``\emph{relevant} solutions''.  From the ~$\omega_{R}\geq 0$~
condition and from the equation \eqref{basicEQ} one can immediately
see that the real part of the modes is both, upper and lower
bounded. This means that if the imaginary part is \emph{not} (upper)
bounded necessarily holds the equation \eqref{infty}. The periodic
functions in the equation \eqref{basicEQ} suggest that the
$\omega_{nI}$ is (upper) unbounded. This intuition can be explicitly
proven for the rational ratio of the surface gravities, continuity
of $\omega_{n}$ as a function of $\kappa_{\pm}$ and numerical
results suggest that such intuition should work for arbitrary
$\kappa_{\pm}$. In this paper the basic assumption is that the
$\omega_{nI}$ spectrum is (upper) \emph{unbounded} so the
Maggiore's conjecture makes a good sense.

\section{The existence theorems}

Take the rational ratio of the surface gravities $\kappa_{+}/|\kappa_{-}|=n_{-}/n_{+}$, where $n_{\pm}$ are relatively prime integers. As we already mentioned in the introductory part, in such case we have \cite{Visser, Skakala} proven that quasi-normal frequencies split into (in general) multiple equispaced families with the same gap in the spacing between the frequencies.
Then, if we define by
$z=\exp\left(\frac{2\pi\omega}{\kappa_{*}}\right)$, ~(to remind the reader $\kappa_{*}=\kappa_{\pm}n_{\pm}$), the equation
\eqref{basicEQ} turns into the following equation
\begin{equation}\label{polynomial}
z^{n_{+}}+3z^{n_{-}}+2=0~~.
\end{equation}
By the ''\emph{relevant} roots'' of the polynomial
\eqref{polynomial} call such roots, that they generate the relevant
solutions of \eqref{basicEQ}. Since $n_{+}>n_{-}$, (as a result of
the fact that always $\kappa_{+}<|\kappa_{-}|$), the base
frequencies that generate the different equispaced families of the
form \eqref{equispaced} are the relevant ($\omega_{R}\geq 0\to
|z|\geq 1$) roots of this $n_{+}$-degree polynomial \cite{Andersson,
Visser}.

Let us first prove the following theorem:

\newtheorem{one}{Theorem}[section]
\begin{one}
Assume that $\kappa_{+}/\kappa_{-}$ is an irrational number.
Then the limit

\begin{equation}
\lim_{n\to\infty} [\omega_{nI}-\omega_{n-1I}]
\end{equation}

does not exist.
\end{one}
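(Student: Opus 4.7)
The plan is to argue by contradiction: assume the limit $L := \lim_{n\to\infty}[\omega_{nI}-\omega_{n-1,I}]$ exists, and show this forces $\kappa_+/|\kappa_-|$ to be rational. The strategy is to reformulate the QNM condition \eqref{basicEQ} as a geometric constraint on the $2$-torus and then invoke the classical density dichotomy for linear translations.

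Let $A_n := \exp(2\pi\omega_n/\kappa_+)$ and $B_n := 3\exp(2\pi\omega_n/|\kappa_-|)$, so \eqref{basicEQ} reads $A_n+B_n=-2$ identically. Since $\omega_{nR}$ is bounded, $|A_n|$ and $|B_n|$ are bounded and bounded away from zero. Attach to each QNM the phase point
$$P_n := \left(\tfrac{2\pi\omega_{nI}}{\kappa_+},\,\tfrac{2\pi\omega_{nI}}{|\kappa_-|}\right) \in T^2 := (\mathbb{R}/2\pi\mathbb{Z})^2.$$
For a phase pair $(\theta_1,\theta_2)\in T^2$ to come from a QNM, the $2\times 2$ linear system obtained by splitting $A+B=-2$ into real and imaginary parts must yield positive moduli $(|A|,|B|)$ compatible with the definitional constraint $|B|=3|A|^{\kappa_+/|\kappa_-|}$. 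This cuts out a $1$-dimensional closed curve $\Gamma\subset T^2$, and every $P_n$ lies on $\Gamma$.

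By compactness of $T^2$ and a diagonal extraction, pass to a subsequence $\{n_k\}$ along which $P_{n_k+j}$ converges for every fixed $j\geq 0$. The hypothesis on the imaginary-part gaps pins the increments exactly to $j\tau \pmod{(2\pi)^2}$ with
$$\tau := \left(\tfrac{2\pi L}{\kappa_+},\,\tfrac{2\pi L}{|\kappa_-|}\right),$$
so setting $P^\ast := \lim_k P_{n_k}$ yields $P^\ast + j\tau \in \Gamma$ for every $j\geq 0$; by closedness of $\Gamma$, the full translate $P^\ast + \overline{\langle \tau\rangle}$ is contained in $\Gamma$. Now the slope of $\tau$ in $T^2$ is $|\kappa_-|/\kappa_+$; if $\kappa_+/|\kappa_-|$ is irrational and $\tau\neq 0$, the classical density theorem for translations of $T^2$ gives $\overline{\langle\tau\rangle} = T^2$, whence $\Gamma \supseteq P^\ast + T^2 = T^2$, impossible since $\Gamma$ is a proper $1$-dimensional subset. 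If instead $\tau=0$, then $L/\kappa_+,\,L/|\kappa_-|\in\mathbb{Z}$ and $\kappa_+/|\kappa_-|$ is already rational. Either branch contradicts the hypothesis.

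The main obstacle is the boundary case $L=0$, which does not fit cleanly into the above dichotomy (here $\tau=0$ only trivially). This must be excluded separately, e.g.\ by arguing from the quasi-periodic structure of the function $A+B+2$ that the set of QNM imaginary parts has a positive minimum gap (so $L=0$ is impossible outright), or by refining the diagonal extraction to compare blocks of length growing with $k$ so that the cumulative gap $\omega_{n_k+m_k,I}-\omega_{n_k,I}$ has a strictly positive limit which can play the role of $L$ in the argument above.
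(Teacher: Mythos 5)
Your torus reformulation is a nice repackaging of the paper's phase analysis, but the pivotal step is wrong as stated: you invoke ``the classical density theorem for translations of $T^2$'' under the sole hypothesis that the slope of $\tau$ is irrational, and irrational slope does \emph{not} imply a dense orbit. Density of $\{j\tau\}$ requires $1$, $L/\kappa_{+}$ and $L/|\kappa_{-}|$ to be rationally independent; if there is one nontrivial integer relation $a(L/\kappa_{+})+b(L/|\kappa_{-}|)\in\mathbb{Z}$, the orbit closure is only a finite union of circles of rational direction $(-b,a)$, and such a relation is perfectly compatible with $\kappa_{+}/|\kappa_{-}|$ irrational (the hypothetical limit $L$ is not known in advance, so you cannot rule out, e.g., $L(1/\kappa_{+}-1/|\kappa_{-}|)\in\mathbb{Z}$). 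Your dichotomy ``$\tau=0$ or dense'' therefore skips precisely the hard case. For relations with $a+b\neq 0$ the difference $\theta_{1}-\theta_{2}$ still sweeps all values along those circles, and a contradiction can be salvaged because \eqref{condition1} with $C_{(n)\pm}\geq 1$ forces $\cos(\theta_{1}-\theta_{2})<0$ on $\Gamma$; but for the relation $(a,b)\propto(1,-1)$, i.e.\ $L(1/\kappa_{+}-1/|\kappa_{-}|)\in\mathbb{Z}$, the circles have direction $(1,1)$, $\theta_{1}-\theta_{2}$ is \emph{constant} along them, and ``$\Gamma$ is a proper one-dimensional subset'' yields no contradiction, since a one-dimensional $\Gamma$ may well contain such a circle. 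This is exactly the case on which the paper's proof expends most of its effort (asymptotic phase shift equal to $2\pi K$, handled by extracting a convergent subsequence of $(\alpha_{(n)\pm},C_{(n)\pm})$ and showing the recurrence structure forces $\kappa_{+}/|\kappa_{-}|\in\mathbb{Q}$). To close your argument you would need an additional lemma, for instance that along a circle $(\theta_{1}^{*}+t,\theta_{2}^{*}+t)$ the requirement $\bigl|C_{+}(x)e^{i\theta_{1}^{*}}+3C_{-}(x)e^{i\theta_{2}^{*}}\bigr|=2$ can hold for only finitely many $x$ in the bounded relevant range (real-analyticity of the modulus in $x$), so that only finitely many points of the circle can lie in $\Gamma$.

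Two further, smaller points. The claim that $\Gamma$ is a one-dimensional closed curve is asserted rather than proved; fortunately all your contradiction needs is that $\Gamma$ is a closed \emph{proper} subset, and properness does follow from \eqref{condition1}: relevant solutions have $C_{(n)\pm}\geq 1$, hence $\cos(\theta_{1}-\theta_{2})\leq\frac{4-10}{6C_{+}C_{-}}<0$, so every phase pair with $\cos(\theta_{1}-\theta_{2})\geq 0$ is excluded --- you should make this explicit, since it is also what rescues the $a+b\neq0$ subcase above. Likewise ``pins the increments exactly to $j\tau$'' is too strong (the increments only converge to $j\tau$), though the conclusion $P^{*}+j\tau\in\Gamma$ survives via closedness of $\Gamma$. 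Finally, the $L=0$ boundary case you flag is left genuinely unresolved: neither suggested fix (a positive minimum gap, or block-averaged gaps) is carried out, and without it the ``$\tau=0$ implies rationality'' branch collapses, since $L=0$ gives $\tau=0$ with no information about the surface gravities.
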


\begin{proof}
In \cite{Visser} we have proven that if there exists an equispaced family
of quasi-normal frequencies the ratio of surface gravities is
rational. Let us prove now a much stronger statement: If there exists a
subset of relevant solutions of \eqref{basicEQ}  of the type
\begin{equation}\label{set}
\omega_{n}=\omega_{nR}+i.\left[\omega_{I}+(\hbox{gap})\cdot n+\beta(n)\right],~~~~n\in\mathbb{N},
\end{equation}
where $\beta(n)$ is some sequence that goes to 0 as $n\to\infty$, the ratio of the two surface gravities must be rational.
Write the equation \eqref{basicEQ} as the following two
equations:

\begin{equation}\label{first}
C_{(n)+}\cos\{\alpha_{(n)+}\}+3C_{(n)-}\cos\{\alpha_{(n)-}\}=-2
\end{equation}
and
\begin{equation}\label{second}
C_{(n)+}\sin\{\alpha_{(n)+}\}+3C_{(n)-}\sin\{\alpha_{(n)-}\}=0 ~.
\end{equation}
Here
~$C_{(n)\pm}=\exp\left(\frac{2\pi\omega_{nR}}{|\kappa_{\pm}|}\right)$~
and ~$\alpha_{(n)\pm}=\frac{2\pi\omega_{nI}}{|\kappa_{\pm}|}$.

If we take the squares of each side of each of the two equations and
add them, then after some trivial algebra (including a basic
trigonometric identity) we obtain the following equation:

\begin{equation}\label{condition1}
\cos\{\alpha_{(n)+}-\alpha_{(n)-}\}=\frac{4-[C_{(n)+}^{2}+9C_{(n)-}^{2}]}{6C_{(n)+}C_{(n)-}}~.
\end{equation}
But for the relevant $C_{(n)\pm}\geq 1$, related as
$C_{(n)-}=C_{(n)+}^{R}$ (where $0<R=\kappa_{+}/|\kappa_{-}|<1$),
always holds the following: There exists a positive valued function
~$A(R)$,~ ($A(R)>0$~ for ~$\forall R$), such that
\begin{equation}\label{bound}
F(C_{+},R)=\frac{4-(C_{+}^{2}+9C_{+}^{2R})}{6C_{+}^{R+1}}<-A(R)~.
\end{equation}
This statement follows from the fact that $F$ is for ~$C_{+}\geq 1$~
continuous, furthermore ~$F(1,R)=-1$, ~for ~$C_{+}\geq 1$~ ($R\in
(0,1)$~) holds that ~ $F(C_{+},R)<0$,~ and also for ~$C_{+}\to\infty$~
the function $F$ behaves as ~$F(C_{+},R)\to -\infty$.~

Now consider the sequence of the type:
\begin{equation}\label{set2}
\omega_{n}=\omega_{nR}+i\cdot\left[\omega_{I}+(\hbox{gap})\cdot n\right],~~~~n\in\mathbb{N}.
\end{equation}
 If the
frequencies are equispaced in their imaginary part as in the case of \eqref{set2}, then there is a constant phase shift in
the argument of the cosinus in the equation \eqref{condition1}. The
only way to prevent the cosinus becoming after a finite number
of phase shifts in its argument positive and contradicting
\eqref{condition1} together with \eqref{bound} is if the phase shift
in the argument of the cosinus is
~$\Delta_{(n,n-1)}[\alpha_{(n)+}-\alpha_{(n)-}]=const.=2\pi K$,
~$K\in\mathbb{Z}$.
But in such case the sequence \eqref{set} is allowed to converge to a sequence of the form \eqref{set2}, only if \eqref{set2} fulfills the condition ~$\Delta_{(n,n-1)}[\alpha_{(n)+}-\alpha_{(n)-}]=2\pi.K$,~ $K\in\mathbb{Z}$. The reason for this is the following:
 Since cosinus is a uniformly continuous
function, \eqref{set} is allowed to be a set of solutions of the equation
\eqref{basicEQ} only if the following holds:
\begin{eqnarray}
\lim_{n\to\infty}~\Bigg(\cos\left\{2\pi\left[\frac{1}{\kappa_{+}}-\frac{1}{|\kappa_{-}|}\right].\left[\omega_{I}+(\hbox{gap}).n+
\beta(n)\right]\right\}~~~~~~~~~~~~~\nonumber\\
- ~\cos\left\{2\pi\left[\frac{1}{\kappa_{+}}-\frac{1}{|\kappa_{-}|}\right].\left[\omega_{I}+(\hbox{gap}).n\right]\right\}\Bigg)=0.
\end{eqnarray}
This can be fulfilled only in the case the constant phase shift ~(in the argument of the cosinus within the equation \eqref{condition1}) is
in the asymptotic limit ~$2\pi K$,~ $K\in\mathbb{Z}$,~
since otherwise the sequence~~
\[\cos\left\{2\pi\left[\frac{1}{\kappa_{+}}-\frac{1}{|\kappa_{-}|}\right].\left[\omega_{I}+(\hbox{gap}).n+
\beta(n)\right]\right\}\] is, as we have seen, upper bounded by a negative number and the
sequence
\[\cos\left\{2\pi\left[\frac{1}{\kappa_{+}}-\frac{1}{|\kappa_{-}|}\right].\left[\omega_{I}+(\hbox{gap}).n\right]\right\}\]
becomes (almost) periodically positive. So in case the asymptotic
phase shift is other than $2\pi K$, there cannot be any solutions of
\eqref{basicEQ} of the form given by \eqref{set}. But this is a necessary condition for the limit
\eqref{limit} to exist. Hence for such case we have proven that the sequence $\Delta_{(n,n-1)}\omega_{nI}$ does not converge to any limit.

Now let us explore the case in which the \emph{asymptotic} phase shift in the argument of the cosinus in the equation \eqref{condition1} is $2\pi K$, $K\in\mathbb{Z}$. In such way the cosinus in the equation \eqref{condition1} asymptotically approaches a constant value. Let us factorize the arguments in the trigonometric functions by $2\pi$ and denote the factorized arguments by $\tilde\alpha_{(n)\pm}$. Then it is quite clear that for a given $R$ there exists only finite set of $\left(\tilde\alpha_{(b)\pm},C_{(b)\pm}\right)$, such that they fulfill the equations \eqref{first}, \eqref{second} and give \emph{the} particular asymptotically approached constant value of the cosinus in the equation \eqref{condition1}. Then there necessarily exists such ~$\left(\tilde\alpha_{(\infty)\pm}, C_{(\infty)\pm}\right)$~ which is (up to an integer multiple of $2\pi$) obtained as the limit ~$n\to\infty$~ of some infinite subsequence of the sequence ~$(\alpha_{(n)\pm},C_{(n)\pm})$.

 Take
such an infinite subsequence of the sequence ~$\left(\alpha_{(n)\pm},C_{(n)\pm}\right)$.~ Let the subsequence be  labelled
by $m\in\mathbb{N}, ~m=1,2,3...$. Then if $M(m)$ is an infinite
monotonically growing sequence of the natural numbers the
subsequence can be always written in the following form:
\begin{equation}
\alpha_{(m)\pm}=\tilde\alpha_{(\infty)\pm}+M(m).\left[\Delta\alpha +2\pi
K_{1\pm}\right]+\gamma_{\pm}(m),~~~~~K_{1\pm}\in\mathbb{Z}.
\end{equation}
Here
\begin{equation}\label{period}
[\Delta_{(m,m-1)} M(m)].[\Delta\alpha+2\pi K_{1\pm}]=2\pi
K_{2\pm}(m),~~~~K_{2\pm}(m)\in\mathbb{Z},
\end{equation}
 and
~$\gamma_{\pm}(m)$~ go to 0 as ~$m\to\infty$.~Furthermore ~$0\leq\Delta\alpha<2\pi$~ is a constant phase shift factorized by ~$2\pi$~ and common to $\alpha_{(n)\pm}$.~  From the
equation \eqref{period} necessarily follows that:
\begin{equation}
\Delta\alpha+2\pi K_{1\pm}=\pi\left(2 K_{2\pm}(m)/[\Delta_{(m,m-1)}
M(m)]\right)=\pi u, ~~u\in\mathbb{Q}.
\end{equation}
  But then the following holds
\begin{equation}
\Delta_{(m,m-1)}\alpha_{(n)\pm}=2\pi
K_{2\pm}(m)+\Delta_{(m,m-1)}\gamma_{\pm}(m).
\end{equation}
This also means the following:
\begin{eqnarray}
\Delta_{(m,m-1)}\alpha_{(m)+}=~~~~~~~~~~~~~~~~~~~~~~~~~~~~~~~~~~~~~~~~~~~~~~~~~~~~~~~~~~~~~~~~~~~~~~~~~~~~~~~~~~~~~~~~~~\nonumber \\
\Delta_{(m,m-1)}\alpha_{(m)-}+2\pi[\Delta_{(m,m-1)}M(m)]\cdot[\Delta_{\pm}K_{1\pm}]+\Delta_{\pm}[\Delta_{(m,m-1)}\gamma_{\pm}(m)].~~~~~~~~~~~\nonumber
\end{eqnarray}
From this follows:
\begin{eqnarray}\label{ratio}
\frac{\Delta_{(m,m-1)}\alpha_{(m)+}}{\Delta_{(m,m-1)}\alpha_{(m)-}}=\frac{|\kappa_{-}|}{\kappa_{+}}=~~~~~~ ~~~~~~~~~~~~~ ~~~~~~~~~~~~~~~~~~~~~~~~~~~~~~~~~~~~~~~~~~~~\nonumber\\
1+\frac{2\pi[\Delta_{(m,m-1)}M(m)]\cdot[\Delta_{\pm}K_{1\pm}]+\Delta_{\pm}[\Delta_{(m,m-1)}\gamma_{\pm}(m)]
}{2\pi K_{2-}(m)+\Delta_{(m,m-1)}\gamma_{-}(m)}.~~~~~
\end{eqnarray}
It can be easily seen that both
~$\Delta_{\pm}[\Delta_{(m,m-1)}\gamma_{\pm}(m)]$, ~and~
$\Delta_{(m,m-1)}\gamma_{-}(m)$ go to 0 as $m\to\infty$.
 ~From the equation \eqref{ratio} necessarily follows
\begin{equation}\label{final}
\frac{2\pi[\Delta_{(m,m-1)}M(m)].[\Delta_{\pm}K_{1\pm}]+\Delta_{\pm}[\Delta_{(m,m-1)}\gamma_{\pm}(m)]
}{2\pi K_{2-}(m)+\Delta_{(m,m-1)}\gamma_{-}(m)}=const.~.
\end{equation}
The constant can be obtained by taking the limit $m\to\infty$. The
limit ~$m\to\infty$~ then gives the following:
\begin{equation}
const.=\frac{2\Delta_{\pm}K_{1\pm}}{u}\in\mathbb{Q}.
\end{equation}
But then the ratio of the two surface gravities is proven to be rational. This contradicts our assumption that the surface
gravity ratio is an irrational number. It means that also in the case ~$\Delta_{(n,n-1)}[\alpha_{(n)+}-\alpha_{(n)-}]=2\pi.K$, ~$K\in\mathbb{Z}$,  there cannot be a set of solutions to the equations \eqref{basicEQ} of the form \eqref{set}. This is a necessary condition for the limit \eqref{limit} to exist. As a result of this fact the theorem is
proven.
\end{proof}

\newtheorem{two}[one]{Theorem}
\begin{two}
Assume that
$\kappa_{+}/|\kappa_{-}|=(n_{-}/n_{+})\in\mathbb{Q}$,
where $n_{+},n_{-}$ are relatively prime integers. Then, if there
exist relevant solutions of the equation \eqref{basicEQ}, (hence
quasi-normal frequencies), the limit
\begin{equation}\label{limit}
\lim_{n\to\infty}\Delta_{(n,n-1)}\omega_{nI}=\lim_{n\to\infty}
[\omega_{nI}-\omega_{n-1I}]
\end{equation}
does not exist, unless the polynomial in \eqref{polynomial} has only
real relevant roots. (This also means that the limit \eqref{limit}
does not exist in case the product ~$n_{+}\cdot n_{-}$~ is an odd number.)
\end{two}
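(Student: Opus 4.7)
The plan is to work with the polynomial \eqref{polynomial} directly. Each relevant root $z=re^{i\theta}$ (with $r\geq 1$ and $\theta\in(0,2\pi)$, since \eqref{polynomial} has no positive real roots) generates a base frequency $\omega_{a}$ with $\omega_{aI}=\kappa_{*}\theta/(2\pi)$ and an exactly equispaced family with gap $\kappa_{*}$. The union of all families is therefore strictly periodic in $\omega_{I}$ modulo $\kappa_{*}$, so the sequence $\Delta_{(n,n-1)}\omega_{nI}$ is itself exactly periodic. Its limit exists only if it is constant, which forces the $k$ base-frequency imaginary parts (where $k$ is the number of relevant roots) to form an arithmetic progression in $(0,\kappa_{*}]$ with spacing $\kappa_{*}/k$.

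I would next combine this with the conjugation symmetry of the real polynomial (the map $\theta\mapsto 2\pi-\theta$ permutes the set of arguments of relevant roots) together with the exclusion of the positive real value $\theta=2\pi$. These two constraints pin down the arithmetic progression uniquely: the arguments must be $\theta_{j}=\pi(2j-1)/k$ for $j=1,\dots,k$. Conjugation pairs $j$ with $k+1-j$; the self-conjugate midpoint $j=(k+1)/2$ corresponds to the real negative root at $\theta=\pi$ and exists only when $k$ is odd, while every other $\theta_{j}$ is genuinely complex.

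The core step is then to invoke the inequality $\cos\bigl((n_{+}-n_{-})\theta_{j}\bigr)<-A(R)<0$ coming from \eqref{condition1} and \eqref{bound}, which must hold at every $\theta_{j}$. The values $(n_{+}-n_{-})(2j-1)/(2k)\bmod 1$ form an equispaced orbit in $\mathbb{R}/\mathbb{Z}$ of cardinality $k/\gcd(k,n_{+}-n_{-})$; requiring all of them to lie in the open semicircle $(1/4,3/4)$ forces the orbit to collapse to a single point, giving $k\mid(n_{+}-n_{-})$ with the quotient $(n_{+}-n_{-})/k$ odd. Under these divisibility conditions $z_{j}^{n_{+}-n_{-}}=-r_{j}^{n_{+}-n_{-}}$, and the polynomial equation reduces to $z_{j}^{n_{-}}\bigl(3-r_{j}^{n_{+}-n_{-}}\bigr)=-2$. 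Since the right hand side is real, $\sin(n_{-}\theta_{j})=0$, i.e.\ $k\mid n_{-}(2j-1)$. Combining $\gcd(n_{+},n_{-})=1$ with $k\mid(n_{+}-n_{-})$ yields $\gcd(k,n_{-})=1$, hence $k\mid(2j-1)$, whose only solution in $\{1,\dots,k\}$ is $j=(k+1)/2$ with $k$ odd, which is precisely the real-root index. No complex $\theta_{j}$ can therefore be a relevant root, and so $k=1$: every relevant root must be real.

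The parenthetical corollary for $n_{+}\cdot n_{-}$ odd is then immediate, because both $n_{\pm}$ odd makes the negative-real condition $r^{n_{+}}+3r^{n_{-}}=2$ unsolvable for $r\geq 1$, so the polynomial has no real relevant roots at all and the first part of the theorem rules out any complex ones. I expect the main obstacle to be the counting step establishing $k\mid(n_{+}-n_{-})$, in particular disposing carefully of the borderline configuration $\gcd(k,n_{+}-n_{-})=k/2$, where two antipodal points in $\mathbb{R}/\mathbb{Z}$ would need to lie simultaneously in an open arc of length exactly $1/2$; once this is ruled out, the remaining algebraic manipulation is routine.
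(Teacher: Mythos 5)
Your proposal is correct and follows essentially the same route as the paper's own proof: the existence of the limit forces the base-frequency imaginary parts into the equispaced, conjugation-symmetric pattern $\theta_{j}=\pi(2j-1)/k$, the negativity of the cosine coming from \eqref{condition1} and \eqref{bound} forces $k\mid(n_{+}-n_{-})$, and substitution back into \eqref{polynomial} together with coprimality of $n_{\pm}$ collapses everything to $k=1$, i.e.\ only real (negative) relevant roots, with the odd-product case excluded because $|z|^{n_{+}}+3|z|^{n_{-}}\geq 4>2$ rules out real relevant roots. Your write-up is merely a bit more explicit than the paper at the orbit-in-a-semicircle counting step (and your extra observation that $(n_{+}-n_{-})/k$ must be odd is a harmless refinement of the paper's $n_{+}-n_{-}=KN$), but the decomposition and the key lemmas coincide.
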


\begin{proof}

By investigating whether the limit \eqref{limit} exists, we need to
order and label all the quasi-normal frequencies monotonically with
respect to their imaginary part. Now consider the fact that if the
frequencies exist, there are in general multiple families of them
generated by multiple relevant roots. Then if there exist at least
two $\omega_{a}$ base frequencies with \emph{distinct} imaginary
parts, we obtain (in general) a structure with frequencies ordered
with periodically changing gap in their spacing. (Two base
frequencies with the same imaginary parts and different real parts
can be interpreted as being in the asymptotic sense identical:
asymptotically they give the same damped oscillator energy levels.)
The periodicity is (in general) given by the number of different
families given by the $\omega_{a}$-s with \emph{distinct} imaginary
parts. If the gap in the spacing between the frequencies
periodically changes, the limit \eqref{limit} does not exist, since
~$\Delta_{(n,n-1)}\omega_{nI}$~ periodically oscillates. The
following statement holds: Take the case in which the quasi-normal
frequencies split into $N\geq 2$ families, such that are generated
from $\omega_{a}$ base frequencies with \emph{distinct} imaginary
parts. Then the limit \eqref{limit} can exist, if and only if the
imaginary part of the base modes $\omega_{a}$, such that they generate the
different families, is equispaced with the gap in the spacing given
as $(\hbox{gap})/N$. (Here $(\hbox{gap})$ is the gap in the spacing
between the frequencies within each of the families.)

Now take the equation \eqref{condition1} and consider that it
can be, similarly to the previous case, fulfilled only if
~$\Delta_{(n,n-1)}[\alpha_{(n)+}-\alpha_{(n)-}]=2\pi K$,
~$K\in\mathbb{Z}$.~ But, considering that $(\hbox{gap})=\kappa_{*}$,
this means
\begin{equation}
\Delta_{(n,n-1)}[\alpha_{(n)+}-\alpha_{(n)-}]=\frac{2\pi(n_{+}-n_{-})}{N}=2\pi
K .
\end{equation}
This implies that
\begin{equation}\label{condition2}
n_{+}-n_{-}= K\cdot N .
\end{equation}
This cannot be fulfilled for example in case both $N$ and the
product ~$n_{+}\cdot n_{-}$~ are even. (If the product $n_{+}\cdot
n_{-}$ is even, then the difference between $n_{\pm}$ is necessarily
odd.) Note that, since \eqref{polynomial} has real coefficients,
then the complex conjugate of a root of \eqref{polynomial} is a root
of \eqref{polynomial}. Furthermore the complex conjugate of a
\emph{relevant} root is a \emph{relevant} root and the imaginary
parts of the base frequencies $\omega_{a}$ given by two distinct
complex conjugate roots are always distinct. This implies that in
case $n_{+}\cdot n_{-}$ is even, the frequencies can form equispaced
structure only if there exist real relevant roots of
\eqref{polynomial}, (since otherwise is $N$ due to the complex
conjugation always even).

The imaginary parts of the base frequencies
~$\omega_{1}...\omega_{N}$~ are ordered as
~$0<\omega_{1I}<...<\omega_{NI}\leq(\hbox{gap})$ ~and must be given
by the formula
\begin{equation}
\omega_{lI}=\omega_{1I}+\frac{(l-1)(\hbox{gap})}{N}~~. ~~l=1,...,N.
\end{equation}
But note that in fact  ~$\omega_{1I}$~ and ~$-\omega_{1I}=\omega_{
NI}=\omega_{1I}+\frac{(N-1)(\hbox{gap})}{N}$~ belong to the complex
conjugate roots. This means that ~$\omega_{1I}-(-\omega_{
1I})=\frac{(\hbox{gap})}{N}$~ and hence ~$\omega_{
1I}=\frac{(\hbox{gap})}{2N}$.~ That means the imaginary parts of
base frequencies (and all the following frequencies) must be given
as:

\begin{equation}
\omega_{nI}=\frac{(2n-1)(\hbox{gap})}{2N}=\frac{(2n-1)\kappa_{*}}{2N}~~~~n\in\mathbb{N}
.
\end{equation}

(For $N$ odd the imaginary part of the real roots is given by the
choice ~$n_{R}=(N+1)/2+N$.)~ The complex phases of the relevant
roots are then given as:

\begin{equation}\label{phaseeq}
\phi_{n}=\frac{\pi (2n-1)}{N}~.
\end{equation}
The equation \eqref{condition2} implies together with
\eqref{phaseeq} the following:
\begin{equation}
\phi_{n}(n_{+}-n_{-})=(2n-1)K\pi~.
\end{equation}
Plug this into the equation \eqref{basicEQ} and obtain:
\begin{equation}
e^{i\phi_{n}n_{+}}\left(|z_{n}|^{n_{+}}+3(-1)^{-(2n-1)K}|z_{n}|^{n_{-}}\right)+2=\nonumber\\
= e^{i\phi_{n}n_{+}}\left(|z_{n}|^{n_{+}}\pm
3|z_{n}|^{n_{-}}\right)+2=0.
\end{equation}
This necessarily implies that
\begin{equation}
e^{i\phi_{n}n_{+}}=\pm
1~~~~~\to~~~~~\phi_{n}n_{+}=K'_{n}\pi,~~~~K'_{n}\in\mathbb{Z}.
\end{equation}
This means that for arbitrary $n$ holds the following
\begin{equation}
\frac{(2n-1)n_{+}}{N}=K'_{n}~~~~~\to~~~~~~\frac{n_{+}}{N}=K'_{1}\in\mathbb{Z}.
\end{equation}
Then from \eqref{condition2} follows that also
~$n_{-}/N=(K'_{1}-K)\in\mathbb{Z}$.~ But then if ~$N\geq 2$,~
$n_{\pm}$~ are \emph{not} relatively prime, since $N$ is their
common divisor, which contradicts their basic definition. So the
only possible case is $N=1$ and this means there exist only base
frequencies given by real relevant roots of \eqref{polynomial}. (As
we argued before, if there exist non-real relevant roots then
necessarily $N\geq 2$.) The relevant roots must be all negative, as
there exist \emph{no} positive real \emph{relevant} roots of
\eqref{polynomial}. This is due to the fact that for $z\geq 1$
necessarily holds the following:
\begin{equation}\label{noneq}
z^{n_{+}}+3z^{n_{-}}\geq 4 >2 ~~.
\end{equation}
Note, that the inequality \eqref{noneq} can be for $n_{+}\cdot n_{-}$ odd, and ~$|z|\geq1$, ~$z\in\mathbb{R}$~  written as:
\begin{equation}\label{noneq2}
|z^{n_{+}}+3z^{n_{-}}|\geq 4 >2 ~~,
\end{equation}
which means that for the product ~$n_{+}\cdot n_{-}$~ odd there do \emph{not} exist any real relevant
roots, neither positive, nor negative. So the negative real relevant
root can exist only in the case $n_{+}\cdot n_{-}$ is even. This
proves all the statements of the theorem. (It can be easily shown
that a negative real relevant root exists if $n_{+}$ is even and
$n_{-}$ odd, since then the relevant $z=-1$ trivially solves the
equation \eqref{polynomial}, giving the zero value of
~$\omega_{R}$.)
\end{proof}

\section{Discussion}
In this work we have proven the following: If one takes
the largely accepted formulas for the asymptotic QNM frequencies of the non-extremal Reissner-Nordstr\"om black hole
\cite{Motl, Andersson}, (for electromagnetic-gravitational/scalar perturbations), then the limit
\[\lim_{n\to\infty}\Delta_{(n,n-1)}\omega_{nI}\]
for ``most'' of the
values of the two horizon's surface gravities does \emph{not}
exist. Only in the special case of the surface gravities ratio being
rational, hence $n_{+}/n_{-}$ with $n_{\pm}$ being relatively prime integers, and the product $n_{+}\cdot n_{-}$
being an even number there is a \emph{possibility} that the sequence
~$\Delta_{(n,n-1)}\omega_{nI}$~ converges to some limit. (In such
case it is very hard either to disprove that it converges, or to prove
that it does converge.) On the other hand, if in such case the limit exists, we automatically know that it is equal to
$\kappa_{*}=\kappa_{\pm}.n_{\pm}$.

 These results were certainly not
expected, as Maggiore's conjecture was used with a great success in
the case of Schwarzschild black hole, Dirac field perturbations of the non-extremal Reissner-Nordstr\"om black hole and also for the Reissner-Nordstr\"om black hole in a \emph{small charge
limit}. Moreover the limit \eqref{limit} seems to trivially exist also in case of Kerr black hole, see e.g. \cite{Kao}. (The results for the Reissner-Nordstr\"om black hole in the small charge limit suggest, that although the sequence ~$\Delta_{(n,n-1)}\omega_{nI}$~  does not generally converge, in a small charge limit it shows increasingly ``decent'' behavior. This is trivially expected. From the equation \eqref{AndHowls} one can easily observe, that for the limit $Q\to 0$ one recovers the Schwarzschild result for the asymptotic gap in the spacing between the imaginary parts of the quasi-normal frequencies.) It is natural to expect that all the ``difficulties'' with Maggiore's conjecture discovered in this work will transfer to the case of \emph{rotating, charged} black hole (Kerr-Newman black hole). (For a very interesting paper concerning the problem of quasi-normal modes of Kerr-Newman black holes see \cite{BertiKokkotas}.)

The proofs of the theorems from this paper use specific features and
simplicity of the formula \eqref{basicEQ}. On the other hand there are great similarities between the behavior of the asymptotic frequencies of the non-extremal Reissner-Nordstr\"om
case and some of the other spherically symmetric multi-horizon
spacetimes, (like the Schwarzschild-deSitter spacetime and the
Reissner-Nord\-str\"om-deSitter spacetime). (For details see \cite{Visser, Skakala, Visser2}.) The similarities are so significant, that
one can expect that it is to possible to
generalize (at least) most of the results obtained in this paper also to these other
cases.

The theorems we have proven can be viewed in two different ways, a
positive and a negative way. The negative view one might take is
that they might be understood as a disproof of the validity of
Maggiore's conjecture (at least if the conjecture is claimed to hold
(minimally) whenever we have a better idea about black hole
thermodynamical variables). On the other hand let us assume that
there exists an infinite number of special subcases of the case when
the surface gravities ratio is rational and the product $n_{+}\cdot
n_{-}$ is being even, such that in their case the limit exists. In
such case one might adapt a different, positive\footnote{It is fair
to mention that such a positive view would still contradict some of
the results for the quasi-normal frequencies of the Dirac field
perturbations of the non-extremal R-N black hole
\cite{Lopez-Ortega}.} view: Maggiore's conjecture might carry some
information about some quantum restrictions on the values of the two
surface gravities (some information about quantization of the
surface gravities). Such a view might not be completely
unreasonable:
If we could extend our results to the case of
Schwarzschild-deSitter black hole, the restriction on the values of
the surface gravities given by Maggiore's conjecture might largely
coincide with the results derived in the paper \cite{Padmanabhan}.
Here the authors came to the conclusion that in the case of
Schwarzschild-deSitter (S-dS) spacetime there exists a concept of
global temperature (hence global thermodynamical equilibrium of
spacetime) if and only if the ratio of the two surface gravities is
rational. So future generalizations of our results to the
Schwarzschild-deSitter (S-dS) case could, through the Maggiore's
conjecture, offer a much deeper understanding of the thermodynamics
of S-dS spacetime and quantum black holes in general.

But also there is a ``weaker'' sense in which the behavior of quasi-normal
frequencies matches the results of \cite{Padmanabhan}. As we have seen the
case of surface gravities rational ratios is special in the sense
that it asymptotically gives, (in general), a set of black hole
characteristic frequencies. (The characteristic frequencies are given by the values between
which the sequence $\Delta_{(n,n-1)}\omega_{nI}$ asymptotically oscillates.) This might start to have a lot of new meaning,
if one finds a new formulation for the Maggiore's conjecture, such
that it works with a set of black hole's characteristic
frequencies, rather than with \emph{the} black hole's characteristic
frequency. 

In case of the R-N black hole there is another indicator for such (general) surface gravity rational ratio condition. In \cite{Barvinsky1, Barvinsky2} the authors used suitable boundary conditions to obtain an exact quantization of the spherically symmetric charged black holes.  They derived the horizon's area spectrum as:
\begin{equation}\label{Barv.quant}
A_{BH}=8\pi[n+2p]\cdot l_{p}^{2}+4\pi\cdot l_{p}^{2}, ~~~n, p\in\mathbb{N}.
\end{equation}
Here $n$ determines the excitation of the black hole above extremality and $p$ is the number determining the charge of the black hole as ~$Q^{2}=\hbar\cdot p$.~ From the equation \eqref{RNsurface} one can easily conclude that the ratio of the two surface gravities of the R-N black hole is given as\footnote{Specifically here the author wants to thank one of the anonymous referees for making some very useful suggestions.}:
\begin{equation}\label{discussion}
\frac{\kappa_{+}}{|\kappa_{-}|}=\frac{M^{2}(1-\kappa)^{2}}{M^{2}(1+\kappa)^{2}}=\frac{r^{2}_{-}}{r^{2}_{+}}=\frac{A_{-}}{A_{+}}=\frac{r_{+}^{2}r_{-}^{2}}{r_{+}^{4}}=\frac{p^{2}}{(2n+1+p)^{2}}.
\end{equation}
(In the derivation of the equation \eqref{discussion} we used the fact that in Planck units ~$r_{+}\cdot r_{-}=Q^{2}$.~) The equation \eqref{discussion} means that the rational ratios of the two surface gravities might be seen as a consequence of the quantization of the horizon's area suggested by \cite{Barvinsky1, Barvinsky2}.
\bigskip

\acknowledgments{The author is thankful to the referees for interesting and useful comments that improved the content of the present paper. This research was supported by Fundac\~ao de Amparo a Pesquisa do Estado de S\~ao Paulo.}

\end{document}